\newtheorem{proposition}{Proposition}
\newtheorem{theorem}{Theorem}
\theoremstyle{definition}
\newtheorem{example}{Example}
\newtheorem{definition}{Definition}
\newcommand{\R}{\mathbb R} %real
\newcommand{\C}{\mathbb C} %complex
\newcommand{\half}{\tfrac{1}{2}} %half
\renewcommand{\mod}[1]{|#1|} %modulus
\newcommand{\hi}{\mathcal{H}} %Hilbert space
\newcommand{\eh}{\mathcal{E(H)}} %effects
\newcommand{\ip}[2]{\left\langle\,#1\,|\,#2\,\right\rangle} %inner product
\newcommand{\no}[1]{\left\|#1\right\|} %norm
\newcommand{\id}{I} %identity operator
\newcommand{\nul}{O} %null operator
\newcommand{\va}{\mathbf{a}} %a
\newcommand{\vb}{\mathbf{b}} %b
\newcommand{\vg}{\mathbf{g}} %g
\newcommand{\vn}{\mathbf{n}} %n
\newcommand{\vm}{\mathbf{m}} %m
\newcommand{\vsigma}{\boldsymbol{\sigma}} %sigma
\newcommand{\vnull}{\mathbf{0}}%null vector
\newcommand{\coex}{\;\text{\rm \textmarried}\;}
\newcommand{\salg}{\mathcal{F}} %sigma-algebra
\newcommand{\A}{A}%generic observable
\newcommand{\B}{B}%generic observable
\newcommand{\F}{\mathsf{F}}%generic observable
\newcommand{\G}{\mathsf{G}}%generic joint observable
\renewcommand{\S}{\mathfrak{S}}
\newcommand{\bx}{b_{\parallel}} %component of b parallel to a
\newcommand{\by}{b_{\perp}} %component of b perpendicular to a
\newcommand{\allow}{\mathfrak{A}} %allowed region
\begin{document}

\title{Coexistence of qubit effects}

\author[Stano]{Peter Stano}
\address{Peter Stano, Research Center for Quantum Information, Slovak Academy of Sciences, D\'ubravsk\'a cesta 9, 845 11 Bratislava, Slovakia}
\email{peter.stano@savba.sk}

\author[Reitzner]{Daniel Reitzner}
\address{Daniel Reitzner, Research Center for Quantum Information, Slovak Academy of Sciences, D\'ubravsk\'a cesta 9, 845 11 Bratislava, Slovakia}
\email{daniel.reitzner@savba.sk}

\author[Heinosaari]{Teiko Heinosaari}
\address{Teiko Heinosaari, Research Center for Quantum Information, Slovak Academy of Sciences, D\'ubravsk\'a cesta 9, 845 11 Bratislava, Slovakia and Department of Physics, University of Turku, 20014 Turku, Finland}
\email{heinosaari@gmail.com}

\begin{abstract}
We characterize all coexistent pairs of qubit effects. This gives an exhaustive description of all pairs of events allowed, in principle, to occur in a single qubit measurement. The characterization consists of three disjoint conditions which are easy to check for a given pair of effects. Known special cases are shown to follow from our general characterization theorem.
\end{abstract}

\maketitle

\section{Introduction}

Optimal solutions for quantum information processing tasks typically require observables that cannot be described by single selfadjoint operators but are formalized as positive-operator-valued measures (POVMs). Generally, an element of an observable, called an \emph{effect}, can be any positive operator bounded by the identity operator. For instance, an optimal observable for unambiguous discrimination of two non-orthogonal pure states has three elements and none of them is a projection \cite{Chefles00}. Another example is provided by informationally complete observables, which do not have any non-trivial projections as their elements \cite{BuCaLa95}. 

It is well known that two projections can be elements of a single observable if and only if they commute. This condition for effects to be parts of a single observable is called \emph{coexistence} \cite{FQMI83,SEO83}. Coexistence can be therefore viewed as a kind of natural generalization of commutativity. It is remarkable that two effects can be coexistent even if they do not commute, but a general criterion of coexistence is not known. This problem of characterizing coexistent effects, called the \emph{coexistence problem}, is the topic of this paper.

The coexistence of effects is connected to the theoretical limitations built inside the quantum theory, and the concept of coexistence provides a unifying framework for these kinds of issues. Indeed, many theoretical limitations, related both to the foundations and to quantum information processing tasks, can be seen as a consequence of (non-)coexistence of the relevant effects. For instance, the security of Bennett-Brassard 1984 (BB84) protocol \cite{BeBr84} relies on the non-coexistence of the corresponding effects. Moreover, assuming that the Bell inequality is violated, the coexistence of certain effects would lead to the possibility of superluminal communication \cite{Werner01}.

Coexistence (contrary to commutativity) also explains the possibility of unsharp joint measurements of complementary pairs of physical quantities, such as orthogonal spin components or path and interference of an atomic beam. A joint measurement of such pairs is possible only if an increased unsharpness is accepted, and a relevant coexistence condition can be then interpreted as a trade-off relation between the imprecisions of the corresponding measurements. Some recent investigations on this issue are reported, for instance, in \cite{AnBaAs05,BuSh06,BuHe07,LiLiYuCh07}.

In this work we give a complete characterization of the previously stated coexistence problem in the case of two qubit effects. In Sec.~\ref{sec:problem} we recall the coexistence problem in a precise formulation. In Sec.~\ref{sec:fundamental} we present the main result of this paper --- a characterization theorem of coexistent pairs of qubit effects. We also show that the already known special cases are easily recovered from our theorem. A detailed proof of the characterization theorem is given in the appendixes. In Appendix 1 we recall some general facts on coexistence which are needed in our investigation. Appendix 2 then concentrates on the details of the proof.

\section{Coexistence problem}\label{sec:problem}

Let $\hi$ be a complex separable Hilbert space.  An operator $A$ on $\hi$ is an \emph{effect} if 
\begin{equation*}
0\leq \ip{\psi}{A\psi}\leq 1
\end{equation*}
for all $\psi\in\hi$. In terms of operator inequalities this reads
\begin{equation*}
\nul\leq A\leq \id \, ,
\end{equation*}
where $\nul$ and $\id$ are the zero operator and the identity operator, respectively.
We denote by $\eh$ the set of effects.

An \emph{observable} $\G$ is a normalized-effect-valued measure, also called a positive-operator-valued measure (POVM). It is defined on a measurable space $(\Omega,\salg)$, where $\Omega$ is the set of measurement outcomes and $\salg\subseteq 2^\Omega$ is the $\sigma$-algebra of possible events. For each event $X$, the observable $\G$ attaches an effect $\G(X)$. If the system is in a vector state $\psi\in\hi$ and a measurement of $\G$ is performed, the probability of getting a measurement outcome $\omega$ belonging to an event $X$ is $\ip{\psi}{\G(X)\psi}$. Detailed explanations and many examples of this generalized description of quantum observables can be found in \cite{QTOS76, PSAQT82, OQP97, FQMEA02}. 

For a singleton set $\{\omega \}\subset\Omega$, we denote $\G_\omega\equiv\G(\{\omega\})$. If the set of measurement outcomes $\Omega$ is countable, then $\G$ is determined by the set of effects $\G_\omega$, $\omega\in\Omega$. Namely, a general effect $\G(X)$ corresponding to an event $X$ is recovered by formula
\begin{equation*}
\G(X)=\sum_{\omega\in X} \G_\omega \, .
\end{equation*}
In particular, an observable $\G$ with a finite number of measurement outcomes (say, $n$) can be described as a list $(\G_{\omega_1},\ldots,\G_{\omega_n})$. The POVM normalization condition then reads
\begin{equation*}
\sum_{i=1}^n \G_{\omega_i} = \id \, .
\end{equation*}
  
We can also look on the structure of observables from the other side: given a collection of effects, we can ask whether they originate in a single observable. This concept, called coexistence, was first studied by Ludwig \cite{FQMI83}.

\begin{definition}\label{def:definition of coexistence}
Effects $A,B,C,\ldots \in\eh$ are \emph{coexistent} if there exists an observable $\G:\salg\to\eh$ and events $X,Y,Z,\ldots \in\salg$ such that 
\begin{equation}\label{eq:definition of coexistence}
A=\G(X),\quad B=\G(Y), \quad C=\G(Z), \quad \ldots 
\end{equation}
If two effects $A$ and $B$ are coexistent, we denote $A\coex B$.
\end{definition}

It is essential to note that in Definition \ref{def:definition of coexistence} the events $X,Y,Z,\ldots$ need not be disjoint. 

As an example, let $\F$ be the symmetric informationally complete qubit observable consisting of four effects
\begin{eqnarray*}
\F_1 &=& \frac{1}{4} \left[ \id+\frac{1}{\sqrt{3}}(\sigma_x+\sigma_y+\sigma_z) \right] \, , \\
\F_2 &=& \frac{1}{4} \left[ \id+\frac{1}{\sqrt{3}}(-\sigma_x-\sigma_y+\sigma_z) \right] \, ,\\
\F_3 &=& \frac{1}{4} \left[ \id+\frac{1}{\sqrt{3}}(-\sigma_x+\sigma_y-\sigma_z) \right] \, , \\
\F_4 &=& \frac{1}{4} \left[ \id+\frac{1}{\sqrt{3}}(\sigma_x-\sigma_y-\sigma_z) \right] \, .
\end{eqnarray*}
The fact that $\F$ is an observable implies that the effects $\half (\id + \frac{1}{\sqrt{3}} \sigma_j )$, $j=x,y,z$ are coexistent. Indeed, we get
\begin{equation*}
\F(\{1,4\})=\F_1+\F_4=\half (\id + \frac{1}{\sqrt{3}} \sigma_x )
\end{equation*}
and similarly for the other two effects. Actually, this reasoning leads also to a proof of the informational completeness of $\F$ as we can conclude that a measurement of $\F$ provides the same information as three separate measurements of the orthogonal spin components. This example should be compared with the fact that any two projections $\half (\id +  \vn\cdot\vsigma )$ and $\half (\id +  \vm\cdot\vsigma )$ with $\vn\neq\pm\vm$ do not commute and hence are not  coexistent.

In this paper we concentrate on the following \emph{coexistence problem}:

\begin{quote}
Given an effect $A$, characterize all effects $B$ which are coexistent with it. 
\end{quote}

The following simple observation shows that when we are studying the coexistence of two effects (as in this paper), we can restrict ourselves to four outcome observables. 

\begin{proposition}\label{prop:G4}
Effects $A$ and $B$ are coexistent if and only if there exists an observable $\G$ with four outcomes $\{1,2,3,4\}$ such that 
\begin{equation}\label{eq:G4}
A=\G_1+\G_2,\quad B=\G_1+\G_3 \, .
\end{equation}
\end{proposition}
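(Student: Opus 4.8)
The plan is to prove the two implications separately, with the backward direction being essentially immediate and the forward direction requiring only a coarse-graining construction.

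For the \emph{if} direction, suppose a four-outcome observable $\G$ satisfies \eqref{eq:G4}. Setting $X=\{1,2\}$ and $Y=\{1,3\}$, the two equations in \eqref{eq:G4} read precisely $A=\G(X)$ and $B=\G(Y)$, so Definition \ref{def:definition of coexistence} immediately yields $A\coex B$ with no further work.

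For the \emph{only if} direction, assume $A\coex B$. By definition there exist an observable $\G':\salg\to\eh$ on some outcome space $(\Omega,\salg)$ and events $X,Y\in\salg$ with $A=\G'(X)$ and $B=\G'(Y)$. The key idea is to pass to the four atoms of the Boolean algebra generated by $X$ and $Y$ and coarse-grain $\G'$ onto them. Concretely, I would set
\begin{equation*}
\G_1=\G'(X\cap Y),\quad \G_2=\G'(X\setminus Y),\quad \G_3=\G'(Y\setminus X),\quad \G_4=\G'(\Omega\setminus(X\cup Y)).
\end{equation*}
Each $\G_i$ is a value of the effect-valued measure $\G'$ and therefore lies in $\eh$.

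It then remains to verify three facts, all consequences of finite additivity of $\G'$ together with the observation that the four chosen sets are pairwise disjoint with union $\Omega$. First, since $X$ is the disjoint union of $X\cap Y$ and $X\setminus Y$, additivity gives $\G_1+\G_2=\G'(X)=A$, and symmetrically $\G_1+\G_3=\G'(Y)=B$; this is exactly \eqref{eq:G4}. Second, $\G_1+\G_2+\G_3+\G_4=\G'(\Omega)=\id$, so the POVM normalization is inherited from $\G'$ and $(\G_1,\G_2,\G_3,\G_4)$ is a genuine four-outcome observable. There is no serious obstacle in this argument; the only point requiring attention is to confirm that the four sets truly partition $\Omega$, so that additivity applies and both the decomposition of $A,B$ and the normalization follow.
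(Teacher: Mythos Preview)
Your proof is correct and follows exactly the paper's approach: the backward direction is immediate from the definition, and for the forward direction the paper likewise coarse-grains the given observable onto the four atoms $X\cap Y$, $X\cap Y'$, $X'\cap Y$, $X'\cap Y'$ (where $X'=\Omega\setminus X$, $Y'=\Omega\setminus Y$), which are precisely your $\G_1,\G_2,\G_3,\G_4$. The paper is slightly terser in not spelling out the verification of normalization, but the argument is the same.
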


\begin{proof}
By definition, if a four outcome observable $\G$ satisfying Eq.~(\ref{eq:G4}) exists, then $A$ and $B$ are coexistent. Assume then that $A$ and $B$ are coexistent and let $\G:\salg\to\eh$ be an observable such that $A=\G(X), B=\G(Y)$ for some $X,Y\in\salg$. We denote $X'=\Omega\setminus X$ and $Y'=\Omega\setminus Y$, and we set $\widetilde{\G}_1=\G(X\cap Y)$, $\widetilde{\G}_2=\G(X\cap Y')$, $\widetilde{\G}_3=\G(X'\cap Y)$, and $\widetilde{\G}_4=\G(X'\cap Y')$. This defines an observable $\widetilde{\G}$ with the required properties.
\end{proof}

If $A$ is a projection (i.e. $A=A^2$), then the answer to the coexistence problem is simple and well known: an effect $B$ is coexistent with $A$ exactly when $AB=BA$. Generally, however, a characterization of coexistent effects is not known. In the next section we present a full solution to the coexistence problem in the case of a qubit system, i.e. two dimensional Hilbert space $\hi=\C^2$.
 
\section{Qubit effects and their coexistence}\label{sec:fundamental}

Qubit effects $A$ and $B$ can be parametrized by vectors $(\alpha,\va),(\beta,\vb)\in\R^4$ in the following way:
\begin{subequations}
\begin{eqnarray}
A&=&\frac{1}{2}(\alpha I+\va\cdot\vsigma),\qquad a \leq \alpha \leq 2-a\, ,
\label{definition of A}\\
B&=&\frac{1}{2}(\beta I+\vb\cdot\vsigma),\qquad b\leq \beta\leq 2-b \, .
\label{definition of B}
\end{eqnarray}
\label{definition of A and B}
\end{subequations}
Here $\boldsymbol{\sigma}\equiv(\sigma_1,\sigma_2,\sigma_3)$ is the vector of Pauli matrices, and we have denoted $a\equiv\no{\va}$, $b\equiv\no{\vb}$. Note that from Eqs.~(\ref{definition of A}) and (\ref{definition of B}) it follows that $a,b\leq 1$. 

We are now considering $A$ to be fixed and we are looking for all effects $B$ (hence all parameters $\beta$ and $\vb$), which are coexistent with $A$. In order to formulate the characterization theorem, we first introduce the following function $\S$ from $\eh$ to $[0,1]$,
\begin{equation}
\S(A)\equiv\S(\alpha,a):=\frac{1}{2}\left(a^2+\alpha(2-\alpha)-\sqrt{(\alpha^2-a^2)[(2-\alpha)^2-a^2]} \right) \, .
\label{eq:sharpness}
\end{equation}
The following properties of $\S$ are easy to confirm:
\begin{itemize}
\item[(a)] $\S$ is continuous;
\item[(b)] $\S(\id-A)=\S(A)$;
\item[(c)] $\S(UAU^\ast)=\S(A)$ for every unitary operator $U$;
\item[(d)] $\S(A)=1$ if and only if $A$ is a non-trivial projection (i.e. $A^2=A$ and $\nul\neq A\neq \id$);
\item[(e)] $\S(A)=0$ if and only if $A$ is a trivial effect (i.e. $A=\lambda\id$ for some $0\leq\lambda\leq 1$).
\end{itemize}
Due to these properties, we interpret the number $\S(A)$ as a quantification of the \emph{sharpness} of $A$. Naturally, $1-\S(A)$ is then related to the \emph{unsharpness} of $A$. 

For simplicity, we formulate the main theorem below in the case of $0<\alpha\leq 1$ and $0<\beta\leq 1$. We note that if $A$ is defined by parameters $\alpha$ and $\va$, then $\id-A$ corresponds to $2-\alpha$ and $-\va$. As shown in Proposition \ref{prop:equivalent} in Appendix 1, the coexistence of $A$ and $B$ is equivalent to the coexistence of $\id-A$ and $B$. Therefore, the cases when $\alpha>1$ or $\beta>1$ can be recovered easily from the main theorem.  

It is useful to note that for effects $A$ satisfying $0\leq\alpha\leq 1$, we have
\begin{equation}
\S(a,\alpha)\leq \S(\alpha,\alpha)=\alpha \, ,
\end{equation}
and the inequality is strict whenever $a\neq\alpha$. Therefore, the value of the parameter $\alpha$ gives an upper bound for the sharpness of $A$. 

It follows from Proposition \ref{prop:unitary} in Appendix 1 that only the relative angle between $\va$ and $\vb$ is relevant for the coexistence of $A$ and $B$ --- not their absolute directions. In the following it is thus convenient to denote by $\bx$ the component of $\vb$ in the direction of $\va$, and $\by$ the length of the projection of $\vb$ in the plane perpendicular to $\va$. For a given $A$, the coexistence of $B$ with $A$ then depends on parameters $\bx$, $\by$, and $\beta$.

\begin{theorem}\label{th:fundamental}
An effect $B$ is coexistent with $A$ if and only if it falls into one of the following three disjoint cases: 
\begin{itemize}
\item[(C1)] if $\beta\leq 1-\S(A)$, then $A\coex B$ irrespectively of $\vb$;
\item[(C2)] if $\beta>1-\S(A)$ and $|\bx-b_0|\geq w$, then $A\coex B$;
\item[(C3)] if $\beta>1-\S(A)$ and $|\bx-b_0|< w$, then $A\coex B$ if and only if
 \begin{equation}
\by\leq \by^{max}.
\end{equation}
\end{itemize}
Here we have denoted
\begin{eqnarray}
\by^{max} &=& \frac{1}{2a}\sqrt{[(2-\alpha)^2-a^2]\{a^2-[a(\bx-b_0)+(1-\beta)]^2\}}\nonumber\\
 &&+\frac{1}{2a}\sqrt{[\alpha^2-a^2]\{a^2-[a(\bx-b_0)-(1-\beta)]^2\}}\, ,
 \label{nontrivial condition}\\
b_0&=&\frac{1}{a}(1-\alpha)(1-\beta)\, ,\\
w&=&\frac{1}{a}\sqrt{(1-\alpha)^2-\beta[(1-\alpha)^2+1-a^2]+\beta^2}\, .
\end{eqnarray}
\end{theorem}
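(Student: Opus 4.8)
The plan is to convert coexistence into a finite feasibility problem and then solve it by planar geometry. First I would apply Proposition~\ref{prop:G4}: $A\coex B$ iff there is a four-outcome observable with $A=\G_1+\G_2$, $B=\G_1+\G_3$. Writing $D:=\G_1$, the remaining effects are forced to be $\G_2=A-D$, $\G_3=B-D$, $\G_4=\id-A-B+D$, and the only nontrivial requirement is positivity of all four. Hence $A\coex B$ iff there is a selfadjoint $D$ with $\nul\le D\le A$, $D\le B$ and $A+B-\id\le D$. I would then pass to Bloch vectors, $D=\tfrac12(\delta\id+\mathbf{d}\cdot\vsigma)$. Since a qubit operator $\tfrac12(\gamma\id+\vc\cdot\vsigma)$ is positive iff $\gamma\ge\no{\vc}$, these four operator inequalities become the four ``light-cone'' conditions
\begin{equation*}
\no{\mathbf{d}}\le\delta,\quad\no{\va-\mathbf{d}}\le\alpha-\delta,\quad\no{\vb-\mathbf{d}}\le\beta-\delta,\quad\no{\va+\vb-\mathbf{d}}\le2-\alpha-\beta+\delta.
\end{equation*}

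Next I would reduce the dimension. By Proposition~\ref{prop:unitary} only the relative geometry of $\va,\vb$ matters, so I fix coordinates with $\va=(a,0,0)$ and $\vb=(\bx,\by,0)$. All four reference points $\vnull,\va,\vb,\va+\vb$ lie in this plane, so replacing $\mathbf{d}$ by its orthogonal projection onto the plane only shrinks each of the four norms and leaves $\delta$ untouched; thus I may take $\mathbf{d}=(x,y,0)$. For fixed $\mathbf{d}$ the conditions bound $\delta$ below by $\max(\no{\mathbf{d}},\,\alpha+\beta-2+\no{\va+\vb-\mathbf{d}})$ and above by $\min(\alpha-\no{\va-\mathbf{d}},\,\beta-\no{\vb-\mathbf{d}})$, and a compatible $\delta$ exists iff every lower bound is at most every upper bound. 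This gives four ellipse-disk conditions on $\mathbf{d}$, which I label (A)--(D) reading left to right, top to bottom:
\begin{align*}
\no{\mathbf{d}}+\no{\va-\mathbf{d}}&\le\alpha, & \no{\mathbf{d}}+\no{\vb-\mathbf{d}}&\le\beta,\\
\no{\va-\mathbf{d}}+\no{\va+\vb-\mathbf{d}}&\le2-\beta, & \no{\vb-\mathbf{d}}+\no{\va+\vb-\mathbf{d}}&\le2-\alpha.
\end{align*}
Their foci are the vertices of the parallelogram $\vnull,\va,\vb,\va+\vb$: the disks $E_A$ (string $\alpha$) and $E_D$ (string $2-\alpha$) have foci on the two edges parallel to $\va$, while $E_B$ (string $\beta$) and $E_C$ (string $2-\beta$) have foci on the edges parallel to $\vb$. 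I therefore arrive at the reformulation $A\coex B\iff E_A\cap E_B\cap E_C\cap E_D\neq\emptyset$.

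The geometric heart is to decide when these four disks meet as $(\beta,\bx,\by)$ vary with $(\alpha,a)$ fixed. I would first isolate case (C1): for $\beta$ small the pair $E_B,E_C$ is so loose that the intersection is nonempty for every $\vb$. One verifies that the threshold $1-\S(A)$ is exactly the larger root of the quadratic $\beta^2-\beta[(1-\alpha)^2+1-a^2]+(1-\alpha)^2$, whose discriminant equals $(\alpha^2-a^2)[(2-\alpha)^2-a^2]$---the quantity under the root in $\S(A)$ and the same quadratic that makes $w$ real; this identity is what ties the threshold to the sharpness. For $\beta>1-\S(A)$ I would separate the last two regimes by the longitudinal offset $\bx$ of the displacement $\vb$: outside the window $|\bx-b_0|\ge w$ the bottleneck is longitudinal rather than transverse and is met throughout the allowed range $\no{\vb}\le\beta$, giving unconditional coexistence (C2); inside the window the transverse extent is the obstruction, and coexistence holds iff $\by\le\by^{max}$ (C3).

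The hard part is the closed form of $\by^{max}$. It is the largest transverse separation for which the four disks still share a point---a multi-constraint tangency. The weights $\sqrt{(2-\alpha)^2-a^2}$ and $\sqrt{\alpha^2-a^2}$ in (\ref{nontrivial condition}) are the minor semi-axes of $E_D$ and $E_A$, signalling that at the extremal configuration the common point lies on the boundaries of the $\va$-ellipses $E_A$ and $E_D$, while $E_B$ and $E_C$ pin its longitudinal coordinate and thereby inject the $(1-\beta)$-dependence and the centre $b_0$. Writing $\partial E_A$ and $\partial E_D$ as graphs over their common axis, imposing that their heights realize the gap $\by$ at the location selected by $E_B,E_C$, and eliminating $x$ should produce the two-term expression for $\by^{max}$ together with $b_0$ and $w$. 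I expect this coupled tangency elimination, plus the bookkeeping to confirm that the three cases are disjoint and exhaustive and that the extremal $\by$ is attained, to be the main obstacle; everything preceding it is a chain of reversible reductions that already settles both directions of the equivalence.
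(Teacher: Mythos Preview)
Your reduction is sound and takes a genuinely different route from the paper. Both arguments start from the four operator inequalities $\nul\le D$, $D\le A$, $D\le B$, $A+B-\id\le D$ and pass to Bloch vectors, but the paper keeps $\gamma$ (your $\delta$) as a free parameter and studies four \emph{circles} with $\gamma$-dependent radii centred at the parallelogram vertices, asking for which $\gamma$ they have a common point; it then classifies the boundary of the allowed region by whether the limiting single-point intersection is a two-, three-, or four-circle tangency (2CI/3CI/4CI), with 2CI producing (C1)--(C2) and 4CI producing (C3). You instead eliminate $\delta$ at the outset and obtain four \emph{fixed} elliptical disks, reducing the question to pure plane geometry with no auxiliary parameter. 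The two pictures match: a point on $\partial E_A$ satisfies $\no{\mathbf{d}}+\no{\va-\mathbf{d}}=\alpha$, which in the paper's language is exactly where circles 1 and 2 meet for some $\gamma$, and likewise $\partial E_D$ corresponds to circles 3 and 4; thus your tangency ansatz $\partial E_A\cap\partial E_D$ is precisely the paper's 4CI condition $\mathbf{p}_2=\mathbf{p}_3$. Your formulation is arguably cleaner, while the paper's circle picture gives a systematic way to exclude spurious boundary types (their 3CI analysis, which in your language would be a tangency involving $\partial E_B$ or $\partial E_C$ that you have not yet ruled out).

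Two caveats. First, your heuristic for (C1)---``the pair $E_B,E_C$ is so loose''---is misleading: $E_B$ has string length $\beta$ and focal distance $b\le\beta$, so it \emph{shrinks} to a segment as $\beta\to b$; only $E_C$ becomes loose. The actual mechanism behind (C1) in the paper is a scaling argument (Propositions~\ref{prop:convex}--\ref{prop:scaling}) that propagates coexistence inward once the $b=\beta$ boundary is shown to be entirely allowed. Second, the proposal is a plan rather than a proof: the eliminations that pin down $b_0$, $w$, and the two-term formula for $\by^{max}$ are not carried out, and in your framework you will additionally need to show that the extremal configuration really has the common point on $\partial E_A\cap\partial E_D$ (and not, say, on $\partial E_A\cap\partial E_C$), which is what the paper's 2CI/3CI elimination accomplishes. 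Until that case analysis is done, the ``only if'' direction in (C3) is not established.
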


Among the three different situations (C1)--(C3), only in the last one does the coexistence of $A$ and $B$ impose a nontrivial\footnote{By trivial restrictions we mean the inequalities in \eqref{definition of B}. As we have assumed that $\beta\leq 1$, the trivial restrictions are equivalent to the condition $b\leq\beta$.} restriction on the length of vector $\vb$. Namely, the condition $|\bx-b_0|< w$ implies that
\begin{equation}
b^2=\by^2 + \bx^2 \leq (\by^{max})^2+\bx^2<\beta^2.
\label{inside circle}
\end{equation}
The last inequality in Eq.~\eqref{inside circle} is proved at the end of Appendix 2, where we also show that the direction of $\vb$ in which the length of $\vb$ is most restricted is determined by the condition $\bx=b_0$. 
%Since for fixed $\beta$ the shorter $\vb$, the smaller the sharpness $\S(B)$, we interpret the directions along these shortest vectors to represent effects which are mostly restricted if we require coexistence with $A$.

The division of the coexistence condition to the three disjoint cases (C1)--(C3) can be intuitively understood in the following way. The first class (C1) consists of those effects $B$ for which $\beta$ [and consequently, the sharpness $\mathfrak{S}(B)$] is so small that with any choice of $\vb$, the coexistence of $A$ and $B$ is attained. If $\beta$ is above the given threshold $1-\S(A)$ (the unsharpness of $A$), then for some angles between $\va$ and $\vb$, the length of $\vb$ is restricted if $B$ is to coexist with $A$. Namely, there exists an interval for $\bx$, in which cases the length of $\vb$ is limited. The center of the interval is $b_0$, which represents the most strict restriction, and the width of the interval is $2w$. The second class (C2) then consists of those effects $B$ for which $\bx$ is outside the interval and which are coexistent with $A$ even if their sharpness would be the highest possible [i.e.~$\S(B)=\beta$]. The third class (C3) represents effects for which their sharpness is nontrivially restricted if they are to coexist with $A$. 

In Fig.~\ref{fig:examples} we present four illustrative examples. Fig.~\ref{fig:examples}(a) demonstrates the case (C1) where $\beta<1-\mathfrak{S}(A)$, and hence all effects with this $\beta$ coexist with $A$. In Fig.~\ref{fig:examples}(b) we keep the parameters $\alpha$ and $a$ unchanged while $\beta$ is enlarged such that $\beta>1-\S(A)$. The interval with nontrivial restriction on the length of vectors $\vb$ appears -- for $\bx$ outside this interval (C2) applies, while for $\bx$ inside, (C3) applies. Note that the center of the interval is not zero. In Fig.~\ref{fig:examples}(c) we have $\beta=1$ and now the interval is centered at zero, meaning that the restriction on the sharpness of $B$ is most strict if $\va$ and $\vb$ are orthogonal. Furthermore, $w=1$ and thus (C3) covers all the possible cases. In Fig.~\ref{fig:examples}(d) we have $a=\alpha$, which means that $A$ is a multiple of a projection. Nonzero $b_0$ results in a clearly visible asymmetry in the picture. 

\begin{figure}
\psfig{scale=0.8, file=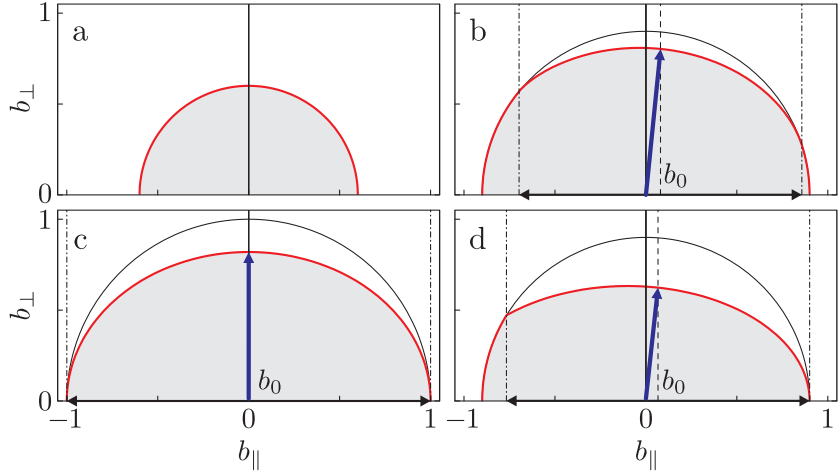}
\caption{Examples of the specification of effects $B$ which coexist with a given effect $A$. On each figure, parameters $\alpha$, $a$, and $\beta$ are fixed, while the $\vb$ vector components $\bx$ and $\by$ are on $x$ and $y$ axis, respectively. The thick red line denotes the boundary --- $A$ and $B$ coexist if and only if the vector $\vb$ is inside the shaded region. The thin black circle represents the condition on $B$ to be an effect, $b\leq\beta$. The blue vector represents the shortest vector $\vb$ lying on the boundary --- its projection to the $x$ axis equals $b_0$. The interval $[b_0-w,b_0+w]$, where a nontrivial restriction on the length of the allowed vectors $\vb$ exists, is denoted by the vertical dot-dashed lines and a black arrow on the $x$ axis. The following parameters are used in the pictures: $\alpha=0.6$ in every figure and (a) $a=0.5$, $\beta=0.6$; (b) $a=0.5$, $\beta=0.9$; (c) $a=0.5$, $\beta=1$; and (d) $a=0.6$, $\beta=0.9$.
\label{fig:examples}}
\end{figure}

In the following examples we demonstrate that the known special cases of coexistence conditions follow easily from Theorem \ref{th:fundamental}.

\begin{example}\label{ex:busch}
Assume that $\alpha=\beta=1$. Using property (e) of $\S(A)$, we see that the coexistence condition (C1) holds if and only if $\va=\vnull$. Whenever $\va\neq\vnull$, we have to look at the extra conditions in (C2) and (C3). We get $b_0=0$, $w=1$, and therefore (C2) occurs only whenever $|\bx|=1$. On the other hand, (C3) is satisfied when $|\bx|<1$ and
\begin{equation}\label{eq:busch1}
\by^2\leq (1-a^2)(1-\bx^2)\, .
\end{equation}
Putting $a=0$ in Eq.~\eqref{eq:busch1} we see that this inequality describes the correct solution also for the case $\va=\vnull$. The case $\bx=1$ is also recovered from Eq.~\eqref{eq:busch1} as $\bx=1$ implies that $\by=0$. In conclusion,  the inequality \eqref{eq:busch1} covers all the cases and we can write it in the symmetric form
\begin{equation}\label{eq:busch3}
a^2+b^2\leq 1 + (\va\cdot\vb)^2\, .
\end{equation}
This result has been first derived by Busch \cite{Busch86} in an equivalent form 
\begin{equation}\label{eq:busch2}
\no{\va+\vb}+\no{\va-\vb}\leq 2\, .
\end{equation}
Other derivations of this condition have been recently given in \cite{AnBaAs05} and \cite{BuHe07}.
\end{example}

\begin{example}\label{ex:chinese}
Assume that $\beta=1$ and $\va\perp\vb$. The first coexistence condition (C1) holds if and only if $\va=\vnull$. Since $\va\perp\vb$, we have $\bx=0$ and $\by=b$, while $b_0=0$ and $w=1$ due to $\beta=1$. Therefore (C2) does not occur and $\va\neq\vnull$ leads to the case (C3) which now reads
\begin{equation}\label{eq:chinese}
b\leq\frac{1}{2}\sqrt{(2-\alpha)^2-a^2}+\frac{1}{2}\sqrt{\alpha^2-a^2} \, .
\end{equation}
Putting $a=0$ in Eq.~\eqref{eq:chinese} we notice that this condition also covers the case $\va=\vnull$. This result has been derived by Liu et al. \cite{LiLiYuCh07}. 
\end{example}

\begin{example}\label{ex:molnar}
Assume that $a=\alpha$ and $b=\beta$, which means that the effects $A$ and $B$ are scalar multiples of projections. We now have $b_0=(1-\alpha)(1-\beta)/\alpha$ and $w=\mod{\alpha+\beta-1}/\alpha$. Since $b=\beta$, according to Eq.~\eqref{inside circle}, the effects $A$ and $B$ coexist if either (C1) or (C2) is satisfied. The condition (C1) holds if and only if $\alpha+\beta\leq 1$,
while in the case $\alpha+\beta>1$ the inequality in (C2) holds if either $\bx\geq\beta$ or
\begin{equation}\label{eq:molnar}
\va\cdot\vb\leq 2-2\alpha-2\beta+\alpha \beta\, . 
\end{equation}
The first case $\bx\geq\beta$ means that $\bx=b$ and thus $\vb$ is parallel to $\va$. In the second case, we notice that $\va\cdot\vb\leq ab\leq\alpha\beta$ and hence $\alpha+\beta\leq 1$ implies Eq.~(\ref{eq:molnar}). Therefore, the inequality (\ref{eq:molnar}) characterizes all the coexistent effects $A$ and $B$ having nonparallel vectors $\va$ and $\vb$. This inequality is also easily obtained from Lemma 2 of Moln\'ar \cite{Molnar01b}.
\end{example}

\section{Conclusion}

We have studied the coexistence problem of two qubit effects, i.e., the question of when two effects can be parts of a single observable. We have solved the problem by providing simple criteria (C1)--(C3), which, taken together, are necessary and sufficient to guarantee the coexistence. We have shown that the known special cases follow straightforwardly from our general coexistence theorem. 

We expect that many theoretical limitations, related both to the foundations and to quantum information processing tasks, can be seen as resulting from (non-)coexistence of effects. The concept of coexistence provides a natural unifying framework for these kinds of questions. The consequences of our main result, Theorem \ref{th:fundamental}, are yet to be found out. 

Finally, we remark that a paper by Busch and Schmidt \cite{BuSc08} was published  simultaneously on the arXiv with an earlier version of this paper. These authors solve the same problem independently with a different method. The final results have yet to be compared.

Recently, a third solution was published by Yu \emph{et al.}~\cite{YuLiLiOh08}. The connection between all these three approaches will be elaborated in a later work.

\section*{Appendix 1: General observations on coexistence}

In this section we list some simple general observations which are needed in the proof of Theorem \ref{th:fundamental}.

\begin{proposition}\label{prop:equivalent}
Let $A,B\in\eh$. The following conditions are equivalent:
\begin{itemize}
\item[(i)] $A$ and $B$ are coexistent;
\item[(ii)] $A$ and $\id-B$ are coexistent;
\item[(iii)] $\id-A$ and $B$ are coexistent;
\item[(iv)] $\id-A$ and $\id-B$ are coexistent.
\end{itemize}
\end{proposition}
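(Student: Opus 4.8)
The plan is to argue directly from the definition of coexistence (Definition \ref{def:definition of coexistence}), using the single observation that an observable sends complementary events to complementary effects. First I would record the elementary fact that for any observable $\G\colon\salg\to\eh$ and any event $X\in\salg$, the complement $X'\equiv\Omega\setminus X$ again lies in $\salg$ (because $\salg$ is a $\sigma$-algebra), and normalization together with additivity of $\G$ on the disjoint union $\Omega=X\cup X'$ gives
\[
\G(X)+\G(X')=\G(\Omega)=\id,\qquad\text{i.e.}\qquad \G(X')=\id-\G(X).
\]

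With this in hand the core of the argument is immediate. Suppose (i) holds, so that there are an observable $\G$ and events $X,Y\in\salg$ with $A=\G(X)$ and $B=\G(Y)$. Then the \emph{same} observable $\G$ simultaneously realizes $\id-A=\G(X')$ and $\id-B=\G(Y')$. Choosing the event $X$ or $X'$ in the first slot and $Y$ or $Y'$ in the second therefore exhibits each of the four pairs in $\{A,\id-A\}\times\{B,\id-B\}$ as coexistent, which proves that (i) implies each of (ii), (iii) and (iv). For the converse directions I would exploit that complementation is an involution, $\id-(\id-A)=A$ and $\id-(\id-B)=B$: applying the implication ``(i)~$\Rightarrow$~(ii)'' with $B$ replaced by $\id-B$ turns (ii) back into (i), and likewise substituting $A\mapsto\id-A$ and/or $B\mapsto\id-B$ reverses the implications into (iii) and (iv). This closes the loop and yields the full equivalence of (i)--(iv).

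I expect no genuine obstacle here; the only points requiring care are bookkeeping ones, namely checking that $X'\in\salg$ and that the additivity of $\G$ over the partition $\{X,X'\}$ is precisely the POVM normalization. As an alternative route that sidesteps even the general measure-theoretic setup, I could instead invoke Proposition \ref{prop:G4}: writing a witnessing four-outcome observable $(\G_1,\G_2,\G_3,\G_4)$ with $A=\G_1+\G_2$ and $B=\G_1+\G_3$, one reads off $\id-A=\G_3+\G_4$ and $\id-B=\G_2+\G_4$, so that a mere relabelling of the four effects produces the observables witnessing the other three conditions. Explicitly, the index permutations $1\!\leftrightarrow\!2,\,3\!\leftrightarrow\!4$ and $1\!\leftrightarrow\!3,\,2\!\leftrightarrow\!4$ (which generate the Klein four-group of such relabellings) carry the witness of (i) to witnesses of (ii), (iii) and (iv), respectively. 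Either route delivers the claim with essentially no computation, so I regard the statement as a structural consequence of the definition rather than a result needing a substantive estimate.
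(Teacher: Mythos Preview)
Your proposal is correct. Your alternative route via Proposition~\ref{prop:G4} is exactly what the paper does: it proves (i)$\Rightarrow$(ii) by the relabelling $\widetilde{\G}_1=\G_2$, $\widetilde{\G}_2=\G_1$, $\widetilde{\G}_3=\G_4$, $\widetilde{\G}_4=\G_3$ and then remarks that the remaining implications follow by iterating this. Your primary route, working directly from Definition~\ref{def:definition of coexistence} and the identity $\G(X')=\id-\G(X)$, is slightly more economical: a \emph{single} witnessing observable $\G$ already realizes all four pairs by choosing the events $(X,Y)$, $(X,Y')$, $(X',Y)$, $(X',Y')$, so no relabelling or appeal to Proposition~\ref{prop:G4} is needed. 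Both arguments are equally valid; your first one just stays closer to the general measure-theoretic definition, while the paper's (and your second) version has the minor advantage of making the Klein four-group symmetry on the four-outcome witness explicit.
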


\begin{proof}
It is enough to prove that  (i) implies (ii). The other implications follow by applying this to different combinations of $A$ and $\id-A$ with $B$ and $\id-B$.

Assume that $A$ and $B$ are coexistent and that $\G$ is a four outcome observable satisfying Eq.~(\ref{eq:G4}). We define another four outcome observable $\widetilde{\G}$ by 
\begin{equation*}
\widetilde{\G}_1:=\G_2,\quad \widetilde{\G}_2:=\G_1, \quad \widetilde{\G}_3:=\G_4,\quad \widetilde{\G}_4:=\G_3 \, .
\end{equation*}
Then 
\begin{equation*}
\widetilde{\G}_1+\widetilde{\G}_2=\G_2+\G_1=A
\end{equation*}
and
\begin{equation*}
\widetilde{\G}_1+\widetilde{\G}_3=\G_2+\G_4=\G_2+\id-\G_1-\G_2-\G_3=\id-B \, .
\end{equation*}
Thus, $A$ and $\id-B$ are coexistent.
\end{proof}

\begin{proposition}\label{prop:unitary}
Let $A,B\in\eh$ and $U$ a be unitary operator on $\hi$. The following conditions are equivalent:
\begin{itemize}
\item[(i)] $A$ and $B$ are coexistent;
\item[(ii)] $UAU^\ast$ and $UBU^\ast$ are coexistent.
\end{itemize}
\end{proposition}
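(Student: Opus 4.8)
The plan is to observe that conjugation by a unitary, the map $\Phi_U\colon X\mapsto UXU^\ast$, carries effects to effects and observables to observables, and then to transport a witnessing observable through this map. Since $U^\ast$ is again unitary and $\Phi_{U^\ast}=\Phi_U^{-1}$, the two implications (i)$\Rightarrow$(ii) and (ii)$\Rightarrow$(i) are symmetric, so I would prove only one of them and obtain the other by replacing $U$ with $U^\ast$.

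First I would record that $\Phi_U$ preserves the effect order: from $\nul\le D\le\id$ and $UU^\ast=\id$ one gets $\nul=U\nul U^\ast\le UDU^\ast\le U\id U^\ast=\id$, so $\Phi_U$ maps $\eh$ into itself. Because $\Phi_U$ is additive and fixes $\id$, it sends any finite observable $(\G_1,\dots,\G_n)$ --- that is, any tuple of effects summing to $\id$ --- to the tuple $(U\G_1U^\ast,\dots,U\G_nU^\ast)$, which again consists of effects summing to $\id$ and is therefore an observable.

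Next I would invoke Proposition \ref{prop:G4}: assuming $A$ and $B$ coexistent, fix a four-outcome observable $\G$ with $A=\G_1+\G_2$ and $B=\G_1+\G_3$, and set $\widetilde{\G}_i:=U\G_iU^\ast$. By the previous step $\widetilde{\G}$ is a four-outcome observable, and one checks
\begin{equation*}
\widetilde{\G}_1+\widetilde{\G}_2=U(\G_1+\G_2)U^\ast=UAU^\ast,\qquad \widetilde{\G}_1+\widetilde{\G}_3=U(\G_1+\G_3)U^\ast=UBU^\ast,
\end{equation*}
so that $\widetilde{\G}$ witnesses the coexistence of $UAU^\ast$ and $UBU^\ast$ via Eq.~\eqref{eq:G4}. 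This yields (i)$\Rightarrow$(ii), and the converse follows as noted above.

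I do not expect any genuine obstacle here: the statement is conceptual rather than computational, asserting merely that coexistence is invariant under the symmetries of the effect space. The only point requiring care is the routine verification that $\Phi_U$ indeed lands inside $\eh$ and respects the normalization $\sum_i\G_i=\id$, both of which are immediate from $U$ being unitary. It is precisely this invariance that later permits reducing the qubit coexistence problem to the \emph{relative} geometry of $\va$ and $\vb$ rather than their absolute directions.
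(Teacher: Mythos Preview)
Your proof is correct and follows exactly the paper's approach: reduce to one implication by replacing $U$ with $U^\ast$, then conjugate a four-outcome observable $\G$ from Proposition~\ref{prop:G4} to obtain the witness $U\G_jU^\ast$ for $UAU^\ast$ and $UBU^\ast$. You have simply made explicit the routine verifications (that $\Phi_U$ preserves effects and normalization) that the paper leaves implicit.
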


\begin{proof}
It is enough to prove that  (i) implies (ii) as the other implication is obtained from this by applying $U^\ast$ instead of $U$. Assume that $A$ and $B$ are coexistent and that $\G$ is  a four outcome observable satisfying Eq.~(\ref{eq:G4}). Then $U\G_j U^\ast$ is a four outcome observable which satisfies a similar relation for observables $UAU^\ast$ and $UBU^\ast$.
\end{proof}

\begin{proposition}\label{prop:convex}
Let $A,B,C$ be effects such that $A$ is coexistent with both $B$ and $C$. Then for any $0\leq\lambda\leq 1$, the effects $A$ and $\lambda B + (1-\lambda) C$ are coexistent.
\end{proposition}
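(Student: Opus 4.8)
The plan is to exploit the convexity of the set of four-outcome observables, reducing everything to Proposition \ref{prop:G4}. Since $A$ is coexistent with $B$, that proposition furnishes a four-outcome observable $\G$ with $A=\G_1+\G_2$ and $B=\G_1+\G_3$; likewise, coexistence of $A$ and $C$ gives a four-outcome observable $\mathsf{H}$ with $A=\mathsf{H}_1+\mathsf{H}_2$ and $C=\mathsf{H}_1+\mathsf{H}_3$. The decisive point, guaranteed by Proposition \ref{prop:G4}, is that both observables can be arranged so that the marginal recovering $A$ is the sum of the \emph{first two} effects; this common indexing is what makes the two witnesses compatible.

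Next I would form the pointwise convex combination $\mathsf{K}_i:=\lambda\G_i+(1-\lambda)\mathsf{H}_i$ for $i=1,2,3,4$ and check that it is again a four-outcome observable. Each $\mathsf{K}_i$ is a convex combination of two effects, hence satisfies $\nul\leq\mathsf{K}_i\leq\id$, and the normalization is immediate: $\sum_{i=1}^4\mathsf{K}_i=\lambda\sum_{i=1}^4\G_i+(1-\lambda)\sum_{i=1}^4\mathsf{H}_i=\lambda\id+(1-\lambda)\id=\id$.

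Finally I would read off the two marginals of $\mathsf{K}$. Linearity gives $\mathsf{K}_1+\mathsf{K}_2=\lambda(\G_1+\G_2)+(1-\lambda)(\mathsf{H}_1+\mathsf{H}_2)=\lambda A+(1-\lambda)A=A$ and $\mathsf{K}_1+\mathsf{K}_3=\lambda(\G_1+\G_3)+(1-\lambda)(\mathsf{H}_1+\mathsf{H}_3)=\lambda B+(1-\lambda)C$. By the easy direction of Proposition \ref{prop:G4}, the existence of such a $\mathsf{K}$ establishes $A\coex\lambda B+(1-\lambda)C$.

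I do not expect a genuine obstacle here: the argument is a one-line convexity observation once the two coexistence witnesses are put in the normal form of Proposition \ref{prop:G4}. The only point requiring a moment's care is the alignment of indices --- one must use the \emph{same} slot (here, outcomes $1$ and $2$) to recover $A$ from both $\G$ and $\mathsf{H}$, so that the averaged observable $\mathsf{K}$ still yields $A$ on that slot; this is exactly what Proposition \ref{prop:G4} permits. Verifying that $\mathsf{K}$ remains an observable (positivity, upper bound, and normalization) is then routine.
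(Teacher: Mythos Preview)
Your proposal is correct and follows essentially the same approach as the paper: both take the two four-outcome observables furnished by Proposition~\ref{prop:G4}, form their convex combination outcome-wise, and verify that the resulting observable has $A$ and $\lambda B+(1-\lambda)C$ as the appropriate marginals. Your write-up is slightly more explicit about checking that the mixture is again an observable, but the argument is the same.
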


\begin{proof}
Let $\G^1$ be a four outcome observables satisfying the condition (\ref{eq:G4}) for $A$ and $B$, and $\G^2$ another four outcome observable satisfying a similar condition for $A$ and $C$. Let $\G$ be an observable defined as $\G_j = \lambda \G^1_j + (1-\lambda) \G^2_j$ for $j=1,2,3,4$. Then
\begin{eqnarray*}
\G_1+\G_2 &=& \lambda (\G^1_1 +  \G^1_2 ) + (1-\lambda)  (\G^2_1 + \G^2_2)\\
&=& \lambda A + (1-\lambda)  A = A
\end{eqnarray*}
and
\begin{eqnarray*}
\G_1+\G_3 &=& \lambda (\G^1_1 +  \G^1_3 ) + (1-\lambda)  (\G^2_1 + \G^2_3)\\
&=& \lambda B + (1-\lambda)  C \, .
\end{eqnarray*}
Hence, the effects $A$ and $\lambda B + (1-\lambda) C$ are coexistent.
\end{proof}

\begin{proposition}\label{prop:scaling}
Let $A$ and $B$ be coexistent effects. Then $A$ is coexistent with $\lambda B$ for any $0\leq\lambda\leq 1$.
\end{proposition}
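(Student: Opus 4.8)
The plan is to realize $\lambda B$ as a convex combination of $B$ itself and the trivial (zero) effect, and then invoke Proposition \ref{prop:convex}. The only preliminary fact I need is that every effect is coexistent with the zero effect $\nul$, after which the statement drops out immediately.

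First I would verify that $A\coex\nul$ by exhibiting an explicit four outcome observable. Setting $\G_1=\nul$, $\G_2=A$, $\G_3=\nul$, and $\G_4=\id-A$ gives a valid POVM: each entry is an effect (since $\nul$, $A$, and $\id-A$ all lie in $\eh$) and the four entries sum to $\id$. This observable satisfies $\G_1+\G_2=A$ and $\G_1+\G_3=\nul$, so by Proposition \ref{prop:G4} we conclude $A\coex\nul$. This step is where the otherwise-omitted content lives, and it is genuinely trivial.

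Next, for a fixed $\lambda\in[0,1]$, I would simply write $\lambda B=\lambda B+(1-\lambda)\nul$, exhibiting $\lambda B$ as a convex combination of the two effects $B$ and $\nul$. Since $A$ is coexistent with $B$ by hypothesis and with $\nul$ by the previous step, Proposition \ref{prop:convex} (applied with $C=\nul$) yields at once that $A$ is coexistent with $\lambda B+(1-\lambda)\nul=\lambda B$, which is the desired conclusion.

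There is no substantive obstacle here, since the proposition is essentially a corollary of the convexity result already established. The single point that must not be overlooked is the trivial-but-necessary observation that $\nul$ coexists with every effect; this furnishes the second endpoint of the line segment along which Proposition \ref{prop:convex} operates, and without it the convexity argument would have nothing to interpolate against.
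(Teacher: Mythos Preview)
Your proposal is correct and takes exactly the same approach as the paper: apply Proposition \ref{prop:convex} with $C=\nul$. You merely spell out the (trivial) verification that $A\coex\nul$, which the paper leaves implicit.
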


\begin{proof}
Choose $C=\nul$ in Proposition \ref{prop:convex}.
\end{proof}

\section*{Appendix 2: Proof of the characterization theorem}

In this section we give a detailed proof of Theorem \ref{th:fundamental}. We first formulate the question whether two effects given in Eqs.~\eqref{definition of A} and \eqref{definition of B} coexist as a geometric problem. We then characterize its solution for particular boundary (i.e. limiting) cases. In the last step we identify and analyze each possible way of how a boundary case can occur. We find that a boundary case can happen in only two ways: the first way leads to (C1) and (C2) and the second leads to the (C3) condition.

\subsection*{Step 1: Formulation of the coexistence condition as an intersection requirement for four circles}

We first shortly recall the formulation of the coexistence condition as an intersection requirement for four circles \cite{Busch86,BuHe07}. As shown in Proposition \ref{prop:G4}, the coexistence of $A$ and $B$ is equivalent to the existence of a four outcome observable $\G$. This, in turn, is equivalent to the existence of a single effect $\G_1$ satisfying the following operator inequalities \cite{SEO83}:
\begin{equation}
\G_{1}\geq \nul,\quad \G_{1}\leq A, \quad \G_{1} \leq B, \quad \id+\G_{1} \geq A + B\, .
\label{conditions on G++}
\end{equation}

We parametrize $\G_1$ in the same way as $A$ and $B$ in Eqs.~\eqref{definition of A} and \eqref{definition of B},
\begin{equation}\label{eq:definition of G++}
\G_{1}=\frac{1}{2}(\gamma\id+\vg\cdot\vsigma), \qquad 0 \leq g \leq \gamma \leq 2-g\, .
\end{equation}
With respect to a given parametrization, conditions \eqref{conditions on G++} can be recast into the following four inequalities:
\begin{eqnarray}
\no{\vg}&\leq& \gamma \, ,
\label{inequality 1}\\
\no{\va-\vg}&\leq&\alpha-\gamma \, ,
\label{inequality 2}\\
\no{\vb-\vg}&\leq&\beta-\gamma \, ,
\label{inequality 3}\\
\no{\va+\vb-\vg}&\leq&2+\gamma-\alpha-\beta \, .
\label{inequality 4}
\end{eqnarray}
In conclusion, effects $\A$ and $\B$ are coexistent if and only if there exist parameters $\gamma$ and $\vg$ such that the inequalities \eqref{inequality 1}--\eqref{inequality 4} are satisfied. [Note that the inequality in Eq.~\eqref{eq:definition of G++} is implied by these four inequalities, so we do not have to include Eq.~\eqref{eq:definition of G++} separately.] 

In the three dimensional space, each inequality can be viewed as a ball of allowed vectors $\vg$. These four balls are centered in points ${\bf 0}$, $\va$, $\vb$, and $\va+\vb$, respectively, with radii given by the right hand side of the corresponding inequality. The effects $A$ and $B$ are therefore coexistent if and only if there is a $\gamma$ such that the intersection of the four balls is non-empty. Important here is that the radii change with $\gamma$, which is a free parameter. The intersection also shows the freedom in choosing different vectors $\vg$ --- for each $\gamma$, when the intersection is non-empty, all points in the intersection can be chosen as $\vg$. From this also follows that a unique effect $\G_1$ satisfying Eq.~\eqref{eq:definition of G++} exists if and only if there is only one such $\gamma$ that the four balls intersect and for this particular $\gamma$, they intersect only in one point. 

By Proposition \ref{prop:unitary} in Appendix 1, the coexistence of $A$ and $B$ depends only on the numbers $\alpha,\beta,a,b$ and on the relative angle between $\va$ and $\vb$. Without any loss of generality, we choose the coordinate system such that the vector $\va$ lies along the $x$-axis and vector $\vb$ is in the $x$-$y$ plane. Then, a single point in the $x$-$y$ plane with coordinates $(x_0,y_0)$ represents a cone of three dimensional vectors $\vb$ (parametrizing effects $\B$), which all have the length along vector $\va$ equal to $\bx=x_0$ and the length in the perpendicular plane equal to $\by=y_0$.

Then, if there is a point $\vg$ in the intersection, its projection to the $x$-$y$ plane is also in the intersection, because the projection is closer than $\vg$ to each of the centers of the four balls. As we are interested in whether the intersection is empty or not, it is thus enough to study the intersection in the $x$-$y$ plane only. Projecting on the $x$-$y$ plane we obtain four circles centered in the corners of a parallelogram with sides $\va$ and $\vb$, which have the radii given in Eqs.~\eqref{inequality 1}--\eqref{inequality 4}. This geometrical formulation of the problem is summarized in Fig.~\ref{fig:jedna}.

\begin{figure}
\psfig{scale=0.5, file=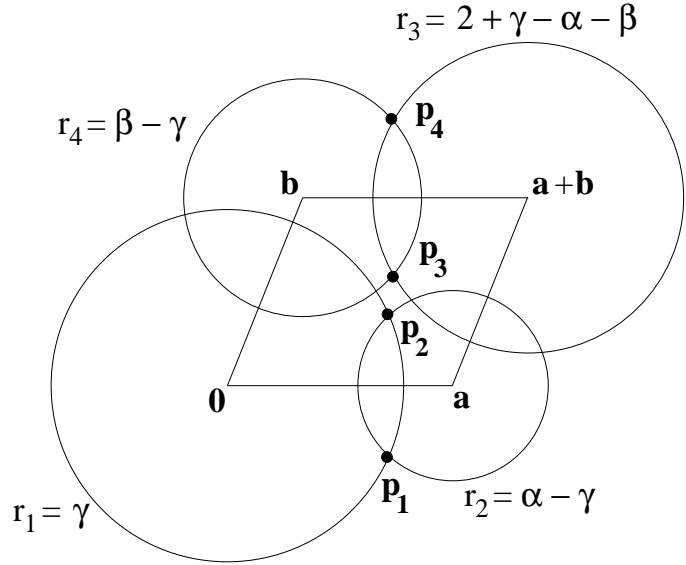}
\caption{The effects $A$ and $B$ are coexistent if and only if the intersection of the four circles is non-empty. The four circles are centered at the corners of a parallelogram with sides $\va$ and $\vb$. The circles' radii are given in the figure and depend on $\gamma$. For the particular $\gamma$ used in the figure, the intersection is empty. Later we will also need the common points of circles 1 and 2 denoted by ${\bf p_1}$ and ${\bf p_2}$, and circles 3 and 4, denoted by ${\bf p_3}$ and ${\bf p_4}$.
\label{fig:jedna}}
\end{figure}

\subsection*{Step 2: Restriction to the boundary cases}

We will answer the question of whether $\A$ and $\B$ are coexistent by fixing the parameters $\alpha$, $a$, and $\beta$ and specifying the allowed region $\allow$ in the two dimensional $x$-$y$ plane --- if a vector $\vb$ lies inside this allowed region, then the corresponding effects $\A$ and $\B$ are coexistent. 

It follows from Proposition \ref{prop:convex} in Appendix 1 that if vector $\vb$ is in the allowed region $\allow$, then all the vectors $\lambda\vb$, $0\leq \lambda\leq 1$, are in the allowed region $\allow$ also. Namely, choosing $B=\half (\beta\id+\vb\cdot\vsigma)$ and $C=\half \beta\id$ in Proposition \ref{prop:convex} we arrive at this conclusion. In later steps of this proof we will find a vector $\vb$ in each direction of the $x$-$y$ plane such that $\vb$ is in the allowed region $\allow$ but there is no vector $\vb'\in\allow$ having the same direction as $\vb$, but greater length. This set, which we call \emph{the boundary of the allowed region} $\allow$, thus characterizes all vectors in $\allow$.

The key property of the boundary which we exploit in our investigation is the following. 
\begin{proposition} The boundary of the allowed region $\allow$ is formed by such vectors $\vb$ that only such parameters $\gamma$ exist that the set of the intersection points of the four circles is non-empty but has zero area. 
\label{main idea}
\end{proposition}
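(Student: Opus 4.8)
The plan is to recast membership in $\allow$ in terms of a single \emph{feasibility margin} and to identify the boundary with the zero set of this margin. For fixed $\alpha,\beta,\va$ and a given $\vb$, let $s_i(\gamma,\vg)$ denote the slack (right-hand side minus left-hand side) of the $i$-th inequality in \eqref{inequality 1}--\eqref{inequality 4}, and set
\begin{equation*}
d(\vb):=\sup_{\gamma,\vg}\ \min_{1\le i\le 4}\ s_i(\gamma,\vg).
\end{equation*}
The admissible values of $\gamma$ lie in a compact interval and $\vg$ may be confined to a fixed compact disk, so $d$ is the maximum of a jointly continuous function over a fixed compact set and is therefore continuous. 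By Step~1, $\vb\in\allow$ exactly when $d(\vb)\ge 0$; moreover, for some $\gamma$ the four circles intersect in a set of \emph{positive} area exactly when $d(\vb)>0$, since a point lying in the interior of all four disks is precisely a point at which all four slacks are strictly positive. Thus the proposition reduces to the claim that the (radial) boundary of $\allow$ equals $\{d=0\}\cap\allow$.

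The heart of the matter is the forward inclusion: a boundary vector cannot admit a positive-area intersection. Suppose $\vb_0$ lies on the boundary, so $\vb_0\in\allow$ while $(1+\varepsilon)\vb_0\notin\allow$ for every $\varepsilon>0$, and suppose for contradiction that $d(\vb_0)>0$, witnessed by a pair $(\gamma_0,\vg_0)$ making all four inequalities strict. Now \eqref{inequality 1} and \eqref{inequality 2} are independent of $\vb$, while the left-hand sides $\no{\vb-\vg}$ and $\no{\va+\vb-\vg}$ of \eqref{inequality 3} and \eqref{inequality 4} depend continuously on $\vb$; hence, keeping $\gamma_0$ and $\vg_0$ fixed, all four inequalities persist under the replacement $\vb_0\mapsto(1+\varepsilon)\vb_0$ for small $\varepsilon>0$. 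This places $(1+\varepsilon)\vb_0$ in $\allow$, contradicting maximality. Therefore $d(\vb_0)\le 0$, and together with $d(\vb_0)\ge 0$ this gives $d(\vb_0)=0$; an attaining pair then yields a $\gamma$ whose four circles meet in a nonempty set with empty interior, i.e.\ of zero area. This is exactly the property asserted.

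For the reverse inclusion I would exploit convexity. Each slack $s_i$ is an affine function of $\gamma$ minus the Euclidean norm of an affine function of $(\vb,\vg)$, hence is jointly concave in $(\vb,\gamma,\vg)$; a minimum of concave functions is concave, and partial maximisation over the convex admissible set of $(\gamma,\vg)$ leaves $d$ concave on the $(\bx,\by)$-plane. Since Proposition~\ref{prop:convex} already shows that $\allow=\{d\ge 0\}$ is convex, a standard fact for concave functions applies: provided $\{d>0\}\neq\varnothing$, one has $\{d>0\}=\mathrm{int}\{d\ge 0\}$, for a would-be interior zero of $d$ is contradicted by carrying a strictly positive value through it along a segment. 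Hence every non-boundary point of $\allow$ satisfies $d>0$ and so admits a $\gamma$ giving positive area, which is the contrapositive of the remaining inclusion; as $\allow$ is then a convex body containing the origin in its interior, its radial boundary coincides with its topological boundary and the two meanings of ``boundary'' agree.

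The main obstacle is not the forward step but the hypothesis $\{d>0\}\neq\varnothing$ underlying the converse, i.e.\ that $\allow$ is genuinely two-dimensional. This fails precisely when circles $1$ and $2$ are externally tangent for every $\gamma$ --- that is, when $a=\alpha$, so that $A$ is a scalar multiple of a projection --- in which case every intersection is a single point, $d\le 0$ throughout $\allow$, and ``zero area'' no longer separates the boundary from the interior. I would therefore carry out the geometric argument in the generic regime $0<a<\alpha\le 1$ and recover the degenerate case from the continuity of $\S$ [property~(a)] and of the closed-form expressions in Theorem~\ref{th:fundamental}, or, in the sharp-projection limit, from the classical criterion that $B$ coexists with a projection $A$ if and only if $AB=BA$.
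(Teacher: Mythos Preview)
Your forward argument --- that no boundary vector admits a positive-area intersection --- is the same idea as the paper's: fix a witnessing configuration and perturb $\vb$ slightly in the radial direction. The paper phrases this as ``the boundaries of the circles move smoothly when changing $\vb$''; you fix the pair $(\gamma_0,\vg_0)$ and invoke continuity of the slacks in $\vb$. These are equivalent. Note, however, that the paper \emph{proves only this direction}: its proof of Proposition~\ref{main idea} is just the contrapositive of ``boundary $\Rightarrow$ zero-area only'', and no converse is established. The subsequent Step~3 then takes every possible zero-area configuration (2CI/3CI/4CI) and checks case by case which ones actually give boundary points, so the biconditional is never used.

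Your reverse inclusion via concavity of the feasibility margin $d$ is a genuine addition and is sound where it applies: the partial supremum of a jointly concave function is concave, and a concave function that is strictly positive somewhere cannot vanish in the interior of its non-negativity set. You are also right that this breaks at $a=\alpha$, and that the failure is not merely technical: with $\alpha=a=\beta=\tfrac12$ one has $\S(A)=\tfrac12$, so (C1) makes $\allow$ the full disk $\{b\le\tfrac12\}$, yet disks $1$ and $2$ are externally tangent for every admissible $\gamma$ and the intersection is always a single point --- every $\vb\in\allow$ satisfies the zero-area criterion, not just the boundary ones. Thus the proposition, read as a biconditional, is false in that regime; only the implication the paper actually proves holds in general. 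Your proposed fix (treat $a<\alpha$ and pass to the limit via continuity of the explicit expressions in Theorem~\ref{th:fundamental}) is viable for recovering the theorem, but note that the commutativity criterion you cite handles only the sharp case $a=\alpha=1$, not the full line $a=\alpha<1$.
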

\begin{proof} Assume that for a vector $\vb$ there is $\gamma$ such that the intersection has positive area. The boundaries of the circles move smoothly when changing vector $\vb$. Thus, there is $\epsilon>0$ such that for all vectors $\vb'$ satisfying $\|\vb'-\vb\|<\epsilon$, the change from $\vb$ to $\vb'$ does not make the intersection of the circles disappear. In particular, there is $\vb'$ having the same direction as $\vb$, but greater length. Therefore, $\vb$ is not in the boundary of $\allow$.
\end{proof}

Two circles can intersect in a set with a positive area or in a point. It follows that a non-empty zero area intersection of any number of circles is a point (and not, e.g. a curve).

Finally, we make an interesting observation (which is, however, not needed in the proof).
In the same way as previously,  we can deduce that if the intersection region has positive area, there must be an interval of $\gamma$'s leading to intersections. This means that a unique $\gamma$ (such that $\G$ exists) implies a single point intersection (and consequently unique $\vg$). It is, however, not true, that the existence of a single point intersection implies unique $\G$ --- for example, there are cases where there are only single point intersections, but $\gamma$ can be chosen from an interval of positive length (and also vectors $\vg$ differ for different $\gamma$). This fact will become evident later in the proof.

\subsection*{Step 3: Division of single point intersections into three cases}

Four circles can intersect in one point in three distinct ways:
\begin{itemize}
\item 2CI --- two circles intersect in one point and this point lies inside of the two remaining circles. We will see that one of the possible 2CI intersections defines the boundary of $\allow$, which is formed by vectors $\vb$ of length $\beta$, leading to conditions (C1) and (C2).
\item 3CI --- three circles intersect in one point, but out of these three, no two circles intersect in one point and the point lies strictly inside of the fourth circle. It will turn out that a 3CI case never defines a boundary point of $\allow$ --- a necessary condition for a 3CI implies that one of these three circles contains another one and therefore can be disregarded, forcing the case to be a 2CI case.
\item 4CI --- there is a point laying on the boundary of all four circles, but this point does not lead to 2CI. Such intersections define the boundary formed by vectors $\vb$ shorter than $\beta$, leading to condition (C3).
\end{itemize}
These three cases are illustrated in Fig.~\ref{tri}. In the following, we will put aside the case where $\va$ and $\vb$ are (anti)parallel vectors and $a=\alpha$. This assumption simplifies our investigation a bit and we will later in step 4 check that this case is also covered by the final result.

\begin{figure}
\psfig{width=\textwidth,file=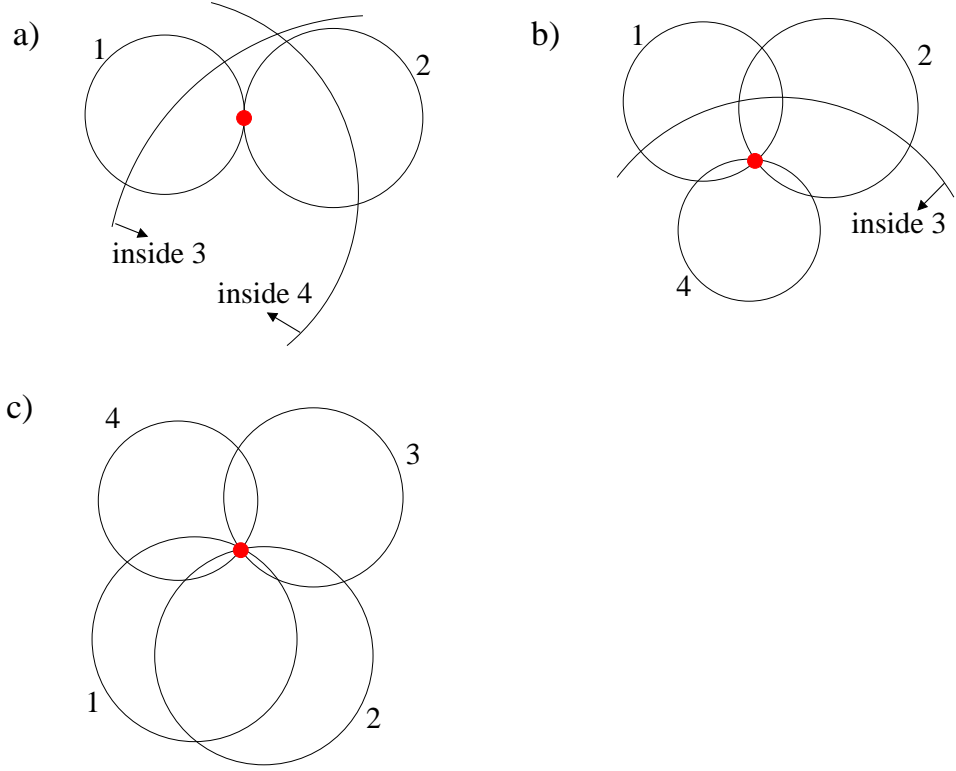}
\caption{Three different cases where the intersection is one single point. (a) Two circles touch in a single point, which is inside the remaining two circles. (b) Three circles intersect in a single point; it is not the case (a), and the intersection is inside the fourth circle. (c) Four circles intersect in a single point and it is not the case (a) or (b).
\label{tri}}
\end{figure}

\subsubsection*{Boundary of the allowed region in the 2CI case}

If $\vb$ is on the boundary of the allowed region and it is a 2CI case, then there is $\gamma_0$ such that two circles intersect in a point that is inside the remaining two circles and there is no $\gamma$ for which the intersection has a nonzero area. Then the two circles making the single point intersection cannot be the circles 1 and 3 as both these circles' radii grow with $\gamma$ --- we could enlarge $\gamma_0$ by a small amount such that the circles 1 and 3 would intersect in a nonzero area and the two remaining circles would not move enough to leave this area. We would thus obtain a four circle intersection of a nonzero area, which contradicts Proposition \ref{main idea}. Similarly, a boundary point formed by 2CI can not be due to the circles 2 and 4 --- they both grow if we decrease $\gamma$. Moreover, a boundary point can arise neither from circles 1 and 2, since they do not change with changing $\vb$, nor circles 3 and 4, which shift by the same amount when $\vb$ is changed. A single point intersection of circles 2 and 3 implies equality $b=2-\beta$, which is never the case since we have restricted ourselves to the case $\beta\leq 1$. The only possibility is that a single point intersection is formed by circles 1 and 4, leading to the condition $b=\beta$. 

We will now specify when a vector $\vb$ of length $\beta$ is in the allowed region $\allow$. In this case it is also on the boundary of $\allow$ since $b$ can not be larger than $\beta$ according to Eq.~\eqref{definition of B}. 
Assume that $b=\beta$. The radii of circles 1--4 are well defined (i.e. non-negative) when 
\begin{equation}\label{eq:gamma-basic-restriction}
0\leq\gamma\leq\min(\alpha,\beta)\, .
\end{equation}
For a given $\gamma$ satisfying Eq.~\eqref{eq:gamma-basic-restriction}, circles 1 and 4 touch in one point $\gamma\vb/b$. This point is inside of circle 2 if $\no{\gamma \vb/b-\va}\leq \alpha-\gamma$. We can write this requirement in the form
\begin{equation}
\gamma \leq \gamma_M:=\frac{\beta}{2}\frac{\alpha^2-a^2}{\alpha\beta-\va\cdot\vb} \, .
\end{equation} 
Here the number $\gamma_M$ is well defined and non-negative unless $\va$ and $\vb$ are parallel vectors and $a=\alpha$, a case which will be treated later. 

Similarly, one finds that the point $\gamma\vb/b$ is inside circle 3 when \begin{equation}
\gamma\geq\gamma_m := \frac{\beta}{2}\frac{(2-\alpha-\beta)^2-\no{\va+\vb}^2}{\alpha\beta-\va\cdot\vb-2\beta}\, .
\label{gamma min}
\end{equation} 
Again, the denominator in Eq.~\eqref{gamma min} is zero only in the case we have put aside. We conclude that the vector $\vb$ is in the allowed region if there exists $\gamma$ satisfying the inequalities \eqref{eq:gamma-basic-restriction} and
\begin{equation}\label{eq:gamma-m-M}
\gamma_m\leq\gamma\leq\gamma_M\, .
\end{equation}
We already noted that $\gamma_M \geq 0$. It also holds $\gamma_m\leq\min(\alpha,\beta)$. Namely, putting for $\va\cdot\vb$ the largest possible value $a\beta$, we get $\gamma_m=\frac{1}{2}(\alpha+ a)-(1-\beta)\leq\min(\alpha,\beta)$. Since $\gamma_m$ is an increasing function of $\va\cdot\vb$ (for all other parameters fixed), we have $\gamma_m\leq\min(\alpha,\beta)$ for all possible values of $\va\cdot\vb$. 
From this follows that $\gamma$ fulfilling both Eqs.~\eqref{eq:gamma-basic-restriction} and \eqref{eq:gamma-m-M} exists (and the vector $\vb$ is in the allowed region $\allow$) if and only if $\gamma_M-\gamma_m \geq 0$. 

We now look at the difference $\gamma_M-\gamma_m$ as a function of $\bx$. As we are only interested in the sign, we can equally well study the expression
\begin{equation}\label{eq:quadratic}
\left( \alpha - a\frac{\bx}{\beta} \right) \left( 2-\alpha + a\frac{\bx}{\beta} \right) \left( \gamma_M - \gamma_m \right)
\end{equation}
since the first two terms are positive unless $\bx=\pm\beta$ and $a=\alpha$. This expression is a quadratic polynomial of $\bx$, with roots 
\begin{equation}
\bx^\pm\equiv b_0\pm w:=\frac{1}{a}\left( (1-\alpha)(1-\beta) \pm \sqrt{D}\right)\, ,
\label{eq:limit bs}
\end{equation}
where
\begin{equation}
D:=(1-\alpha)^2-\beta[(1-\alpha)^2+1-a^2]+\beta^2\, .
\end{equation}
Here we obtained $b_0$ and $w$ used in Theorem \ref{th:fundamental}. The coefficient at $\bx^2$ of the quadratic polynomial \eqref{eq:quadratic} is positive and the polynomial is non-negative in the points $\bx=\pm\beta$. Therefore, there are three possible cases: (i) The discriminant $D$ is negative and the roots are complex --- this means that the difference $\gamma_M-\gamma_m$ is positive for all vectors $\vb$ of length $\beta$. (ii) Both roots are outside the interval $(-\beta,\beta)$. Again, this means that the difference $\gamma_M-\gamma_m$  is positive for all  $\vb$. (iii) Both roots are in the interval $[-\beta,\beta]$. In this case the difference $\gamma_M-\gamma_m$ is negative in between the two roots $\bx^\pm$, and these solutions do not correspond to a vector $\vb$ in $\allow$.

As the last step, we take a look at $D$ as a quadratic polynomial of $\beta$. First of all, it is non-negative at $\beta=0$ and $\beta=1$, and its two roots, labeled by $\beta_1$ and $\beta_2$  such that $\beta_1\leq\beta_2$, belong to the interval $[0,1]$. For $\beta\in[\beta_1,\beta_2]$, the discriminant $D$ is negative and therefore all $\vb$ are in the allowed region as discussed earlier. The scaling property from Proposition \ref{prop:scaling} says that all effects $B$ having $b=\beta\leq\beta_1$ will be coexistent with $A$ as well. Therefore, whenever $\beta\leq\beta_2$, all vectors $\vb$ of length $\beta$ are in the allowed region. What is left is to check the case $\beta>\beta_2$. 

We assume that $\beta>\beta_2$ and we show that in this case the solutions $\bx^\pm$ lie inside the interval $[-\beta,\beta]$. Let us look at the expression $a\bx^-$ as a function of $\alpha$, $a$, $\beta$.  We define a function $f$ by formula
\begin{equation*}
f(\alpha,a,\beta):= (1-\alpha)(1-\beta) - \sqrt{(1-\alpha)^2-\beta[(1-\alpha)^2+1-a^2]+\beta^2} \, ,
\end{equation*}
and the domain of $f$ is taken to be the region where $\alpha\in[0,1]$, $a\in[0,\alpha]$, and $\beta\in [\beta_2(a,\alpha),1]$. Then $f$ is a continuous function and its domain is a connected region in $\R^3$. A direct calculation shows that the equation $f(\alpha,a,\beta)=a\beta$ implies
\begin{equation}
0=\beta(1-\beta)(\alpha-a)(2-\alpha+a)\, .
\label{zero}
\end{equation}
Thus, $f$ can take the value $a\beta$ only on the boundary of its domain. From the continuity of $f$ and the connectedness of its domain then follows, that if for one point in the domain we have $f(\alpha,a,\beta) < a\beta$, then $f(\alpha,a,\beta) \leq a\beta$ in the whole domain. An analogous equation to Eq.~\eqref{zero} allows similar reasoning for the lower limit $-a\beta$. On the other hand, we have $f(2/3,1/2,4/5)\simeq -0.18$, which is inside the interval $(-a\beta,a\beta)=(-2/5,2/5)$. We thus conclude that $\bx^-\in[-\beta,\beta]$. The fact that $\bx^+\in[-\beta,\beta]$ can be shown in a similar way.

The fact that for $\beta\leq\beta_2$ all vectors $\vb$ are in the allowed region $\allow$ leads us to the definition of sharpness for effects in Eq.~(\ref{eq:sharpness}) -- we define the sharpness as $\S(A)=1-\beta_2$. We then conclude that 
\begin{itemize}
\item  if $\beta\leq1-\mathfrak{S}(\A)$, the whole boundary is formed by vectors $\vb$ of length $\beta$ -- in other words, in this case the allowed region is a circle with diameter $\beta$ and the center at $\vnull$, corresponding to (C1);
\item  if $\beta>1-\mathfrak{S}(\A)$, the boundary is given by vectors $\vb$ of length $\beta$ if and only if $\bx\notin (\bx^-,\bx^+)$, corresponding to (C2).
\end{itemize}

\subsubsection*{Boundary of the allowed region in the 3CI case}

Let us assume, for instance, that a 3CI case defining a boundary point $\vb$ is formed by the intersection of the circles 1, 2, and 4, i.e., the circles 1, 2 and 4 intersect in a single point which is inside the circle 3 (see Fig.~\ref{tri}). Looking at Fig.~\ref{fig:jedna}, points common to circles 1 and 2 are ${\bf p_1}$ and ${\bf p_2}$. The first one is not closer to circle 4 than the second one. Therefore, if the circles 1, 2, and 4 have a single common point, it must be ${\bf p_2}$.

Let us define the following function to compare the distance of $\bf p_2$ from the center of the circle 4 and its radius,
\begin{equation}
d(\gamma):=\no{\vb-{\bf p_2}(\gamma)}^2-(\beta-\gamma)^2 \, .
\label{eq:magic d}
\end{equation}
If the point ${\bf p_2}$ lies on circle 4 for some $\gamma_0$, then $d(\gamma_0)=0$. Moreover, if the point ${\bf p_2}$ lies inside (outside) circle 4, then  $d(\gamma_0)<0$ [$d(\gamma_0)>0$].  If $\partial_\gamma d(\gamma)|_{\gamma_0} < 0$, then there exists an interval ($\gamma_0$,$\gamma_1$) where $d(\gamma)<0$. Since we have assumed that the common point of circles 1, 2, and 4 is inside circle 3, there exists $\gamma\in(\gamma_0,\gamma_1)$ such that the four circles intersect in a region with nonzero area. This is in contradiction with Proposition \ref{main idea}. A similar reasoning rules out the case $\partial_\gamma d(\gamma)|_{\gamma_0} > 0$. Therefore, a necessary condition for a 3CI case is the set of equations $d(\gamma_0)=0$ and $\partial_\gamma d(\gamma)|_{\gamma_0}=0$. 

The coordinates for ${\bf p_2}$ are $x=\alpha(2\gamma-\alpha)/2a+a/2$ and $y=\sqrt{\gamma^2-x^2}$. By making the substitution $\gamma= (\alpha-a \kappa)/2$ we can express the distance $d$ as a function of the new variable $\kappa$ in the form
\begin{equation}
d(\kappa)=c_1(c_2-\sqrt{1-\kappa^2}+c_3 \kappa)\, ,
\end{equation}
where $c_1=\by\sqrt{\alpha^2-a^2}$, $c_1 c_2=b^2-\va\cdot\vb+\alpha\beta-\beta^2$, and $c_1c_3=\bx\alpha-a\beta$. Equation $\partial_\gamma d(\gamma)|_{\gamma_0}=0$ leads to a unique solution $\kappa_0=-c_3/\sqrt{1+c_3^2}$. Putting this into equation $d(\gamma_0)=0$, we get $c_2^2-c_3^2-1=0$. Substituting back the definitions for $c_1$, $c_2$ and $c_3$ we finally obtain that a necessary condition for this particular 3CI is
\begin{equation}
(b^2-\beta^2)[\no{\va-\vb}^2-(\alpha-\beta)^2]=0\, .
\label{3CI condition}
\end{equation}
If the expression in the first bracket is zero, we obtain the condition for 2CI of circles 1 and 4. If the second bracket is zero, circles 2 and 4 are one inside the other (if $\alpha \geq \beta$, then circle 4 is inside circle 2, and it is the opposite if $\alpha\leq \beta$). Their intersection is then the whole smaller circle and this fact does not depend on $\gamma$. We can then disregard the larger circle completely, because the intersection does not depend on it in any respect. The 3CI is thus reduced to 2CI and can not therefore define boundary points different from those found in the previous section dealing with 2CI. In the same way, one finds out that the three other possible 3CI cases are similar and always lead to boundary points defined by a 2CI intersection. The resulting conditions, analogous to Eq.~\eqref{3CI condition}, are summarized in Table~\ref{tab1} for all four possible 3CI.

\begin{table}
\caption{\label{tab1}Necessary conditions for all four possible three circle intersections. The three circles intersecting in a single point are given in the first column. The necessary condition and its geometrical meaning for a 3CI defining the boundary are in the second column.}
\begin{tabular}{|l|l|}
\hline
3CI&necessary condition\\
&and its geometrical meaning\\
\hline
1, 2, and 3&$[b^2-(2-\beta)^2][\no{\va+\vb}^2-(2-\alpha-\beta)^2]=0$\\
& 1 and 3 are one inside the other\\
\hline
1, 2, and 4&$[b^2-\beta^2][\no{\va-\vb}^2-(\alpha-\beta)^2]=0$\\
& 1 and 4 are touching, or 2 and 4 are one inside the other\\
\hline
1, 3, and 4&$[b^2-\beta^2][\no{\va+\vb}^2-(2-\alpha-\beta)^2]=0$\\
& 1 and 4 are touching, or 1 and 3 are one inside the other\\
\hline
2, 3, and 4&$[b^2-(2-\beta)^2][\no{\va-\vb}^2-(\alpha-\beta)^2]=0$\\
& 2 and 4 are one inside the other\\
\hline
\end{tabular}
\end{table}

\subsubsection*{Boundary of the allowed region in the 4CI case}

Let us assume that the first two conditions in (C3) hold. We show that then the right-hand side of Eq.~\eqref{nontrivial condition} defines the perpendicular component of vectors $\vb$ forming the boundary of the allowed region $\allow$. 

A four point intersection can occur if one of the points ${\bf p_1}$ and ${\bf p_2}$ coincides with one of the points ${\bf p_3}$ and ${\bf p_4}$. Since $\by\geq 0$, a single point intersection must be such that points ${\bf p_2}$ and ${\bf p_3}$ coincide. Putting their $x$ coordinates to be equal we obtain the solution for $\gamma$,
\begin{equation}
\gamma=\frac{1}{2}\left[\va\cdot\vb+\alpha\beta-2(1-\alpha)(1-\beta) \right]\, .
\label{gamma ac}
\end{equation}
This solution represents the four circle intersection if the intersection points ${\bf p_2}$ and ${\bf p_3}$ exist. Point ${\bf p_2}$ exists if $(\alpha-a)/2\leq\gamma\leq(\alpha+a)/2$, while point ${\bf p_3}$ exists if $(\alpha-a)/2-(1-\beta)\leq\gamma\leq(\alpha+a)/2-(1-\beta)$. Using these conditions, we conclude that Eq.~\eqref{gamma ac} represents a case ${\bf p_2}$=${\bf p_3}$ if and only if the following condition is fulfilled:
\begin{equation}\label{eq:p2=p3}
(1-\alpha)(1-\beta) - (a+\beta-1) \leq \va\cdot\vb \leq (1-\alpha)(1-\beta) + (a+\beta-1)\, .
\end{equation}

Under the first two conditions in (C3), these inequalities are fulfilled. First of all, a straightforward calculation shows that $\S(\alpha,a)\leq a$. Hence, the inequality $\beta > 1- \S(\alpha,a)$ guarantees that $a+\beta-1>0$. It is then easy to verify that the $\sqrt{D}\leq a+\beta -1$. Therefore, if $\mod{\bx-b_0}<w$, then Eq.~\eqref{eq:p2=p3} holds.

Putting equal the $y$ coordinates for the points ${\bf p_2}$ and ${\bf p_3}$ we finally obtain the equation of the coordinate $\by$ as a function of $\bx$,
\begin{equation}
\begin{split}
\by=&\frac{1}{2a}\sqrt{(\alpha^2-a^2)\{a^2-[(2-\alpha)(1-\beta)+a \bx]^2\}}\\+
&\frac{1}{2a}\sqrt{((2-\alpha)^2-a^2)\{a^2-[\alpha(1-\beta)+a \bx]^2\}} \, .
\label{blue curve}
\end{split}
\end{equation}
This can be rewritten in the form given in Eq.~\eqref{nontrivial condition}. 

\subsection*{Step 4: the case of parallel vectors $\va$ and $\vb$}

Finally, we look at the situation where the vectors $\va$ and $\vb$ are parallel. In this case the effects $A$ and $B$ commute, and this implies that they are coexistent. To check their coexistence directly from Definition \ref{def:definition of coexistence}, one can use a four outcome observable $\G$ defined as
\begin{equation*}
\G_1=AB,\ \G_2=A(\id-B),\ \G_3=(\id-A)B,\ \G_4=(\id-A)(\id-B)\, .
\end{equation*}

On the other hand, the fact that $\va$ and $\vb$ are parallel means that $\by=0$. Clearly, the conditions (C1)--(C3) do not then lead to any restrictions. This completes the proof of Theorem \ref{th:fundamental}.

\subsection*{Step 5: Proof of inequality \eqref{inside circle}}

We have seen in step 3 that for $|\bx-b_0|< w$, condition \eqref{eq:p2=p3} is fulfilled and therefore the expression in Eq. \eqref{blue curve} determines a vector $\vb$ in the allowed region $\allow$. We know from the 2CI case that for $|\bx-b_0|< w$, vector $\vb$ of length $\beta$ is not in the allowed region. From this follows that the length of vector $(\bx, \by^{max})$ must be shorter than $\beta$. We can also see this directly from the expression for $\by^{max}$. If we denote $r^2=\bx^2+(\by^{max})^2$, we get after some algebraic manipulation that
\begin{itemize}
\item $r=\beta$ implies $|\bx-b_0|= w$,
\item $\partial_{\bx} r=0$ implies that either $|\bx-b_0|= w$ or $|\bx-b_0|= 0$. 
\end{itemize}
The first point shows that the vector $(\bx,\by^{max})$ does not reach the length of $\beta$ anywhere inside the interval $\bx\in(\bx^-,\bx^+)$ and therefore the inequality in Eq.~\eqref{inside circle} holds. On the other hand, since $r=\beta$ in the two points $\bx=\bx^\pm$, the continuity of $r$ as a function of $\bx$ implies that it reaches the minimum value somewhere inside the interval. The second point shows that it happens at $\bx=b_0$.

\section*{Acknowledgements}

Our work has been supported by projects CONQUEST No.~MRTN-CT-2003-505089, QAP No.~2004-IST-FETPI-15848, and APVV No.~RPEU-0014-06.

%\bibliographystyle{unsrt}
%\bibliography{bibliography}

\end{document}